\begin{document}
\newtheorem{lemma}{Lemma}
\newtheorem{definition}[lemma]{Definition}
\newtheorem{proposition}[lemma]{Proposition}
\newtheorem{theorem}[lemma]{Theorem}
\newtheorem{corollary}[lemma]{Corollary}
\newtheorem{conjecture}[lemma]{Conjecture}
\newcommand{\bx}{\hfill \rule{2mm}{2.5mm}}

\newenvironment{proof}{\emph{Proof:}}{\bx}
\def\nl{\medskip \\ \noindent}
\def\np{

\medskip}
\setcounter{page}{1}
\normalsize

\title{\vspace{-5mm}{\bf Stable project allocation under distributional constraints\footnote{A preliminary version of this paper has appeared in the proceedings of the 10th Japanese-Hungarian Symposium on Discrete Mathematics and its Applications, 2017.}}}

\author{Kolos Csaba \'Agoston$^1$, P\'eter Bir\'o$^2$\thanks{Supported by the Hungarian Academy of Sciences under its Momentum Programme (LP2016-3/2016), by OTKA grant no.\ K108673.} \ and
Rich\'ard Sz\'ant\'o$^3$
\\
\vspace{-2mm}
\\
\small
$^1$ \emph{Department of Operations Research and Actuarial Sciences,}
\vspace{-1mm}
\\
\small
\emph{Corvinus University of Budapest, H-1093, F\H{o}v\'am t\'er 13-15., Budapest, Hungary}
\vspace{-1mm}
\\
\small
\emph{Email:} {\tt kolos.agoston@uni-corvinus.hu}.
\\
\vspace{-2mm}
\\
\small
$^2$ \emph{Institute of Economics, Research Centre for Economic and Regional Studies,}
\vspace{-1mm}
\\
\small
\emph{Hungarian Academy of Sciences, H-1112, Buda\"orsi \'ut 45, Budapest, Hungary, and}
\vspace{-1mm}
\\
\small
\emph{Department of Operations Research and Actuarial Sciences, Corvinus University of Budapest}
\vspace{-1mm}
\\
\small
\emph{Email:} {\tt peter.biro@krtk.mta.hu}.
\\
\vspace{-2mm}
\\
\small
$^3$
\emph{Department of Decision Sciences}
\vspace{-1mm}
\\
\small
\emph{Corvinus University of Budapest, H-1093, F\H{o}v\'am t\'er 13-15., Budapest, Hungary}
\vspace{-1mm}
\\
\small
\emph{Email:} {\tt richard.szanto@uni-corvinus.hu}.
\\
\vspace{-2mm}
\\
\small
\vspace{-1mm}
\\
\small
\vspace{-1mm}
\\
\small
}
\date{ }
\maketitle

\vspace{-7mm}
\begin{quotation}
\small \noindent {\bf Abstract.}
In a two-sided matching market when agents on both sides have preferences the stability of the solution is typically the most important requirement. However, we may also face some distributional constraints with regard to the minimum number of assignees or the distribution of the assignees according to their types. These two requirements can be challenging to reconcile in practice. In this paper we describe two real applications, a project allocation problem and a workshop assignment problem, both involving some distributional constraints. We used integer programming techniques to find reasonably good solutions with regard to the stability and the distributional constraints. Our approach can be useful in a variety of different applications, such as resident allocation with lower quotas, controlled school choice or college admissions with affirmative action.
\end{quotation}

\begin{quote}
\textbf{Keywords: assignment, stable matching, two-sided markets, project allocation, integer linear programming}\\
\end{quote}


\section{Introduction}

A centralised matching scheme has been used since 1952 in the US to allocate junior doctors to hospitals \cite{Roth84jpe}. Later, the same technology has been used in school choice programs in large cities, such as New York \cite{APR05aer} and Boston \cite{APRS05aer}. Similar schemes have been established in Europe for university admissions and school choice as well. For instance, in Hungary both the secondary school and the higher education admission schemes are organised nationwide, see \cite{www.matchinginpractice_hun_sec} and \cite{www.matchinginpractice_hun_uni}, respectively. Furthermore, it can also be used to allocate courses to students under priorities \cite{DB17ejor}. In the above mentioned applications it is common that the preferences of the applicants and the rankings of the parties on the other side are collected by a central coordinator and a so-called stable allocation is computed based on the matching algorithm of Gale and Shapley \cite{GS62amm}. Two-sided matching markets, and the above applications in particular, have been extensively studied in the last decades, see \cite{RS90} and \cite{Manlove13} for overviews from game theoretical and computational aspects, respectively.

In this paper we describe two recent applications at the Corvinus University of Budapest, where we used a similar method with some interesting caveats. In the first application we had to allocate students to projects in such a way that the number of students allocated to each project is between a lower and an upper quota, together with an additional requirement over the distribution of the foreign students. This is a natural requirement present in many applications, such as the Japanese resident allocation scheme \cite{KK14aer}. In the second application we scheduled students to companies for solving case studies in a conference, and here again we faced some distributional constraints.

We decided to use integer programming techniques for solving both applications. We had at least three reasons for choosing this technique. The first is that with IP formulations we can easily encode those distributional requirements that the organisers requested, so this solution method is robust to accommodate special features. The second reason is that the computational problem became NP-hard as the companies submitted lists with ties. Using ties in the ranking was by our recommendation to the companies, because ties give us more flexibility when finding a stable solution under the distributional constraints. We describe this issue more in detail shortly. Finally, our third reason for choosing IP techniques was that it facilitates multi-objective optimisation, e.g. finding a most-stable solution if a stable solution does not exist under the strict distributional constraints.

The usage of integer programming techniques for solving two-sided stable matching problems is very rare in the applications, and the theoretical studies on this topic have only started very recently. The reason is that the problems are relatively large in most applications, and the Gale-Shapley type heuristics are usually able to find stable solutions, even in potentially challenging cases. A classical example is the resident allocation problem with couples, which has been present in the US application for decades, and it is still solved by the Roth-Peranson heuristic \cite{RP99aer}. The underlying matching problem is NP-hard \cite{Ronn90je}, but heuristic solutions are quite successful in practice, see also \cite{BIS11jea} on the Scottish application. However, integer programming and constraints programming techniques have been developed very recently and they turned out to be powerful enough to solve large random instances \cite{BMcBM14wp}, \cite{McBT16c} and \cite{DPB15ijcai}. Similarly encouraging results have been obtained for some special college admission problems, which are present in the Hungarian higher education system. These special features also makes the problem NP-hard in general, but at least one of these challenging features, turned out to be solvable even in a real data involving more than 150,000 applicants \cite{ABMcB16}. Finally, the last paper that we highlight with regard to this topic deals with the problem of finding stable solutions in the presence of ties \cite{KM13wp}. However, we are not aware of any papers that would study IP techniques for the problem of distributional constraints.

Distributional constraints are present in many two-sided matching markets. In the Japanese resident allocation the government wants to ensure that the doctors are evenly distributed across the country, and to achieve this they imposed lower quotas on the number of doctors allocated in each region \cite{KK14aer,KK17jet,GKKTY17aea}. Distributional objectives can also appear in school choice programs, where the decision makers want to control the socio-ethnical distribution of the students \cite{AE07wp,Bo16geb,EY15aer,EHYY14jet}. Furthermore, the same kind of requirements are implemented in college admission schemes with affirmative action \cite{Abdulkadiroglu05ijgt} such as the Brazilian college admission system \cite{AB13wp} and the admission scheme to Indian engineering schools \cite{AT16wp}.

When stable solution does not exists for the strict distributional constraints then we either need to relax stability or to adjust the distributional constraints. In this study we will consider the trade-off between these two goals, and develop some reasonable solution concepts. We use integer programming technique to solve the problems arising from the two real applications and also for the simulations.

\section{Definitions and preliminaries}

Many-to-one stable matching markets have been defined in many context in the literature. In the classical college admissions problem by Gale and Shapley \cite{GS62amm} the students are matched to colleges. In the computer science literature this problem setting is typically called Hospital / Residents problem (HR), due to the National Resident Matching Program (NRMP) and other related applications. In our paper we will refer the two sets as \emph{applicants} $A=\{a_1, \dots , a_n\}$ and \emph{companies} $\{C=c_1, \dots c_m\}$. Let $u_j$ denote the upper quota of company $c_j$.

Regarding the preferences, we assume that the applicants provide strict rankings over the companies, but the companies may have ties in their rankings. This model is sometimes referred to as Hospital / Residents problem with Ties (HRT) in the computer science literature, see e.g. \cite{Manlove13}. In our context, let $r_{ij}$ denote the rank of company $c_j$ in $a_i$'s preference list, meaning that applicant $a_i$ prefers $c_j$ to $c_k$ if and only if $r_{ij}<r_{ik}$. Let $s_{ij}$ be an integer representing the score of $a_i$ by company $c_j$, meaning that $a_i$ is preferred over $a_k$ by company $c_j$ if $s_{ij}>s_{kj}$. Note that here two applicants may have the same score at a company, so $s_{ij}=s_{kj}$ is possible. Let $\bar{s}$ denote the maximum possible score at any company and let $E$ be the set of applications. A \emph{matching} is a subset of applications, where each applicant is assigned to at most one company and the number of assignees at each company is less than or equal to the upper quota. A matching is \emph{complete} if every student is allocated. A matching is said to be \emph{stable} if for any applicant-company pair not included in the matching either the applicant is matched to a more preferred company or the company filled its upper quota with applicants of the same or higher scores.

In the classical college admission problem, that we refer to as HR, a stable solution is guaranteed to exist, and the two-versions of the Gale-Shapley algorithm \cite{GS62amm} find either a student-optimal or a college optimal solution, respectively. Furthermore, this algorithm can be implemented to run in linear time in the number of applications. Moreover, the student-proposing variant was also proved to be strategyproof for the students \cite{Roth84jpe}, which means that no student can ever get a better partner by submitting false preferences. Finally, the so-called Rural Hospitals Theorem \cite{Roth86e} states that the same students are matched in every stable solution, the number of assignees does not vary across stable matchings for any college, and for the less popular colleges where the upper quota is not filled the set of assigned students is fixed.

When extending the classical college admission problem with the possibility of having ties in the colleges' rankings, that we referred to as an HRT instance, the existence of a stable solution is still guaranteed, since we can break the ties arbitrarily, and a stable solution for the strict preferences is also stable for the original ones. However, now the set of matched students and the size of the stable matchings can vary. Take just the following simple example: we have two applicants, $a_1$ and $a_2$ first applying to college $c_1$ with the same score and applicant $a_2$ also applies to college $c_2$ as her second choice. Here, if we break the tie at $c_1$ in favour of $a_1$ then we get the matching $a_1c_1, a_2c_2$, whilst if we break the tie in favour of $a_2$ then the resulting stable matching is $a_2c_1$ (thus $a_1$ is unmatched). The problem of finding a maximum size stable matching turned out to be NP-hard \cite{Manloveetal02ja}, and has been studied extensively in the computer science literature, see e.g. \cite{Manlove13}. Note that when the objective of an application is to find a maximum size stable matchings, such as the Scottish resident allocation scheme \cite{IM08joco}, then the mechanism is not stategyproof. To see this, we just have to reconsider the above example, and assume that originally $a_1$ also found $c_2$ acceptable and would ranked it second, just like $a_2$. By removing $c_2$ from her list, $a_1$ is now guaranteed to get $c_1$ is the maximum size stable solution, however, for the original true preferences $a_2$ would have an equal chance to get her first choice $c_1$.

\subsection{Introduction of lower quotas}

In our first application the organisers of the project allocations wanted to ensure a minimum number of students for each company. Similar requirements have been imposed for the Japanese regions with regard to the number of residents allocated there. In our model, we introduce a lower quota $l_j$ for each company $c_j$ and we require that in a feasible matching the number of assignees at any company is between the lower and upper quotas. Stability is defined as before. We refer to the setting with strict preferences as Hospitals / Residents problem with Lower quotas (HRL) and the case with ties is referred to as Hospitals / Residents problem with Ties and Lower Quotas (HRTL).

Regarding HRL, the Rural Hospitals Theorem implies that the existence of a stable matching that obeys both the lower an upper quotas can be decided efficiently. This is because we just find one stable matching by considering the upper quotas only, and if the lower quotas are violated then there exists no stable solution under these distributional constraints. This problem can be still solved efficiently when the sets of companies have common lower and upper quotas in a laminar system, see \cite{FK16mor}.

However, the problem of deciding the existence of a stable matching for HRTL is NP-hard. To see this, we just have to remark that the problem of finding a complete stable matching for HRT with unit quotas is also NP-hard \cite{Manloveetal02ja}, so if we require both lower and upper quotas to be equal to one for all companies then the two problems are equivalent. Furthermore, no mechanism that finds a stable matching whenever there exists one can be strategyproof.

\subsection{Adding types and distributional constraints}

In our first application, the organisers want to distribute the foreign students across the projects almost equally. In our second application, there are target numbers for the total number of Hungarian, European and other participants and there are also specific lower quotas for Hungarian students by some companies. These applications motivate our problems with applicant types and distributional constraints.

Let $\mathcal{T}=\{T^1, \dots, T^p\}$ be the set of types, where $t(a_i)$ denotes the type of applicant $a_i$. For a company $c_j$, let $l_j^k$ and $u_j^k$ denote the lower and upper quota for the number of assignees of type $T^k$. Furthermore, we may also set lower and upper quotas for any type of applicants for a set of companies. In particular, we denote the lower and upper quotas for the total number of applicants of type $T^k$ assigned in the matching by $L^k$ and $U^k$, respectively. The set of feasibility constraints for the matching is now extended with these lower and upper quotas. Yet, the original stability condition, which does not consider the types of the applicants, remains the same.

\section{Solution concepts and integer programming formulations}

In all of our formulations we use binary variables $x_{ij}\in \{0,1\}$ for each application coming from applicant $a_i$ to company $c_j$. This can be seen as a characteristic function of the matching, where $x_{ij}=1$ corresponds to the case when $a_i$ is assigned to $c_j$.

When describing the integer formulations, first we keep the stability condition fixed while we implement the set of distributional constraints. Then we investigate the ways one can relax stability or find most-stable solutions under the distributional constraints.

\subsection{Finding stable solutions under distributional constraints}

In this subsection we gradually add constraints to the model by keeping the classical stability condition.

\subsubsection*{Classical HR instance}

First we describe the basic IP formulation for HR described in \cite{BB00mp}. The feasibility of a matching can be ensured with the following two sets of constraints.

\begin{equation}
\label{eq:applicant_feasible}
\sum_{j: (a_i,c_j)\in E}x_{ij}\leq 1 \mbox{ for each } a_i\in A
\end{equation}

\begin{equation}
\label{eq:college_feasible}
\sum_{i: (a_i,c_j)\in E}x_{ij}\leq u_j \mbox{ for each } c_j\in C
\end{equation}

Note that \eqref{eq:applicant_feasible} implies that no applicant can be assigned to more than one company, and \eqref{eq:college_feasible} implies that the upper quotas of the companies are respected.\footnote{These conditions are standard for the assignment problem as well, see a survey on this problem and its variants \cite{Pentico07ejor} and an interesting application on marriage markets \cite{CFGSW10ejor}.}

To enforce the stability of a feasible matching we can use the following constraint.

\begin{equation}
\label{eq:stable}
 \left(\sum_{k: r_{ik}\leq r_{ij}} x_{ik}\right)\cdot u_j + \sum_{h: (a_h,c_j)\in E, s_{hj}>s_{ij}}x_{hj}\geq u_j \mbox{ for each }(a_i,c_j)\in E
\end{equation}

Note that for each $(a_i,c_j)\in E$, if $a_i$ is matched to $c_j$ or to a more preferred company then the first term provides the satisfaction of the inequality. Otherwise, when the first term is zero, then the second term is greater than or equal to the right hand side if and only if the places at $c_j$ are filled with applicants with higher scores.\\

Among the stable solutions we can choose the applicant-optimal one by minimising the following objective function.

$$ \sum_{(a_i,c_j)\in E}r_{ij}\cdot x_{ij}$$

\subsubsection*{Modification for HRT}

When the companies can express ties the following modified stability constraints, together with the feasibility constraints \eqref{eq:applicant_feasible} and \eqref{eq:college_feasible}, lead to stable matchings. Note that here the only difference between this and the previous constraint is that the strict inequality $s_{hj}> s_{ij}$ became weak.

\begin{equation}
\label{eq:stable_ties}
\left(\sum_{k: r_{ik}\leq r_{ij}} x_{ik}\right)\cdot u_j + \sum_{h: (a_h,c_j)\in E, s_{hj}\geq s_{ij}}x_{hj}\geq u_j \mbox{ for each }(a_i,c_j)\in E
 \end{equation}

\subsubsection*{Extension with lower quotas}

Here, we only add the lower quotas for every company.

\begin{equation}
\label{eq:college_feasible_lower}
\sum_{i: (a_i,c_j)\in E}x_{ij}\geq l_j \mbox{ for each } c_j\in C
\end{equation}

\subsubsection*{Adding distributional constraints}

As additional constraints we require the number of assignees of a particular type to be between the lower and upper quotas for that type at a company.

\begin{equation}
\label{eq:type_upper}
\sum_{i: t(a_i)=T^k, (a_i,c_j)\in E}x_{ij}\leq u_j^k \mbox{ for each } c_j\in C \mbox{ and } T^k\in\mathcal{T}
\end{equation}

\begin{equation}
\label{eq:type_lower}
\sum_{i: t(a_i)=T^k, (a_i,c_j)\in E}x_{ij}\geq l_j^k \mbox{ for each } c_j\in C \mbox{ and } T^k\in\mathcal{T}
\end{equation}

We can also add similar constraints for set of companies, or for the overall number of different assignees at all companies. We describe the latter, as we will use it when solving our second application.

\begin{equation}
\label{eq:type_upper_overall}
\sum_{i,j: t(a_i)=T^k, (a_i,c_j)\in E}x_{ij}\leq U^k  \mbox{ for each } T^k\in\mathcal{T}
\end{equation}

\begin{equation}
\label{eq:type_lower_overall}
\sum_{i,j: t(a_i)=T^k, (a_i,c_j)\in E}x_{ij}\geq L^k \mbox{ for each } T^k\in\mathcal{T}
\end{equation}

\subsection{Relaxing stability}

Adding additional constraints to the problem can cause the lack of a stable matching, even if we added some flexibility with the ties.

One way to find a most-stable solution is to introduce nonnegative deficiency variables, $d_{ij}$ for each application and add them to the left side of the stability constraint \eqref{eq:stable_ties}. By minimising the sum of these deficiencies as a first objective we can obtain a solution which is close to be stable.

\begin{equation}
\label{eq:stable_relax1}
 \left(\sum_{k: r_{ik}\leq r_{ij}} x_{ik}\right)\cdot u_j + \sum_{h: (a_h,c_j)\in E, s_{hj}\geq s_{ij}}x_{hj} + d_{ij}\geq u_j \mbox{ for each }(a_i,c_j)\in E
\end{equation}

Note that here, if a pair $(a_i,c_j)$ is blocking for the assignment then we need to add more compensation $d_{ij}$ if the number of assignees at $c_j$ that the company prefers to $a_i$ is large. This approach can be reasonable if we want to avoid the refusal of a very good candidate at a company. We call this solution as \emph{matching with minimum deficiency}.

Alternatively, if we just want to minimise the number of blocking pairs then we can set $d_{ij}$ to be binary and minimise the sum of these variables under the following modified constraints.

\begin{equation}
\label{eq:stable_relax2}
 \left(\sum_{k: r_{ik}\leq r_{ij}} x_{ik}\right)\cdot u_j + \sum_{h: (a_h,c_j)\in E, s_{hj}\geq s_{ij}}x_{hj} + d_{ij}\cdot u_j\geq u_j \mbox{ for each }(a_i,c_j)\in E
\end{equation}

Here, every blocking pair should be compensated by the same amount, so the number of blocking pairs in minimised. Note that this concept has already been studied in the literature for various models under the name of \emph{almost stable matchings}, see e.g.\ \cite{McBT16c}.

\subsection{Adjusting upper capacities, envy-free matchings}

A different way of enforcing the lowers quota is to relax stability by artificially decreasing the capacities of the companies. This was also the solution in the resident allocation scheme in Japan \cite{KK14aer}, where the government introduced artificial upper quotas for each of the hospitals, so that in each region the sum of these artificial upper bounds summed up to the target capacity for that region. In the case of our motivating example of project allocation, one simple way of achieving the lower quotas was by reducing the upper quotas at every company.

In this solution what we essentially get is a so-called \emph{envy-free matching}, studied in \cite{WuRoth16}, \cite{Yokoi17} and \cite{AB17ecms}. The matching is stable with respect to the artificial upper quotas, which means that the only blocking pairs that may occur with regard to the original upper quotas are due to the empty slots created by the difference between the original and the artificial quotas, that we call \emph{open-slot blockings}.

However, one may not want to reduce the upper quotas of the companies in the same way, perhaps some more popular companies should be allowed to have more students than the less popular ones. Furthermore, maybe the decision on which upper quotas should be reduced should be made depending on their effect of satisfying the lower quotas (or other requirements). Thus, we may not want to set the artificial upper quotas in advance, but keep them as variables, by ensuring envy-freeness in a different way. One alternative way of enforcing envy-freeness is by the following set of constraints.

\begin{equation}
\label{eq:envy-free}
\sum_{k: r_{ik}\leq r_{ij}} x_{ik}\ge  x_{hj} \hspace{1em}
\forall (a_i,c_j),(a_h,c_j)\in E, s_{ij}>s_{hj}
\end{equation}

Constraints \eqref{eq:envy-free} will ensure envy-freeness, by making sure that if applicant $a_h$ is assigned to company $c_j$ and applicant $a_i$ has higher score than $a_h$ at $c_j$ then $a_i$ must be assigned to $c_j$ or to a more preferred company.

\subsection{Within-type priorities}

So far we have only considered different approaches of relaxing stability or enlarging the set of feasible solutions in order to satisfy the distributional constraints. In this subsection we study alternative solution concepts and methods for the case when the distributional constraints are type-dependent. This is the case also in our motivating application, where special requirements are set for the foreign students assigned to the companies.

When the number of students of a type does not achieve the minimum required at a place then there are two well-known approaches. For instance in a school choice scenario, where the ratio of an socio-ethnic group should be improved (see e.g.\ \cite{AE07wp}) then one possible affirmative action is to increase the scores of that group of students as much as needed. The other usual solution is to set some reserved seats to those students (see e.g.\ \cite{AB13wp}).

In our project allocation application our requirement is to have at least one foreign student assigned to every company. If in a stable solution this condition would be violated for a company then we can try to enforce the admission of a foreign student by increasing the scores of the foreign students at this company. We call such a solution as \emph{stable matching with type-specific scores}, where the classical stability condition is required for the adjusted scores. The second approach is to devote one place at each company to foreign students. For this one seat the foreign students will have higher priority than the locals irrespective of their scores, but for the rest of the spaces the usual score-based rankings apply. We call this concept as \emph{stable matching with reserved seats for types}. Note that neither of these two concepts can always ensure that we get at least one foreign student at each company, since they may all have high scores and they may all dislike a particular company. However, this situation changes if we also allow to decrease the scores of a group of students. We will describe this case after discussing the third approach.

Finally, as a third approach, we can also extend the concept of envy-free matchings for types. We do not require any stability with regard to students of different types, but we do require envy-freeness for students of the same type. Thus the so-called \emph{within-type envy-free matchings} will be those who satisfy the following set of constraints.\footnote{This solution concept was called \emph{within-type $\succ$-compatibility} by Echenique and Yenmez \cite{EY15aer}.}

\begin{equation}
\label{eq:envy-free_type}
\sum_{k: r_{ik}\leq r_{ij}} x_{ik}\ge  x_{hj} \hspace{1em}
\forall (a_i,c_j),(a_h,c_j)\in E, s_{ij}>s_{hj}, t(a_i)=t(a_h)=T^k, T^k\in\mathcal{T}
\end{equation}

That is, if $a_i$ and $a_h$ have the same type and $a_h$ is assigned to $c_j$ then the higher ranked $a_i$ must also be assigned to $c_j$ or to a more preferred company. Note that with this modification we extend the set of feasible solutions compared to the set of envy-free matchings. Another important observation that is motivated by our project allocation problem is that under some realistic assumptions a within-type envy-free matching always exists, that we will show in the following theorem.

\begin{theorem}\label{thm:type_specific}
Suppose that all the companies are acceptable to every student and that the sum of the lower quotas with regard to each type is less than equal to the number of students of that type, and the sum of the lower quotas across types for a company is less than or equal to the upper quota of that company, then a complete within-type envy-free matching always exists and can be found efficiently.
\end{theorem}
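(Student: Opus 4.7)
The plan is to exploit the fact that the within-type envy-freeness condition \eqref{eq:envy-free_type} is entirely a per-type property, so once we decide how many students of each type each company receives, the $p$ types decouple and can be handled independently. I would split the construction into a capacity-allocation step followed by $p$ parallel runs of the student-proposing Gale--Shapley algorithm, one per type.

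For the first step I would look for nonnegative integers $y_j^k$ satisfying $y_j^k\geq l_j^k$, $\sum_{k}y_j^k\leq u_j$ for every $c_j\in C$, and $\sum_{j}y_j^k=n_k$ for every $T^k\in\mathcal{T}$, where $n_k$ denotes the number of students of type $T^k$. Starting from $y_j^k=l_j^k$, the hypothesis $\sum_k l_j^k\leq u_j$ leaves slack in every column and the hypothesis $\sum_j l_j^k\leq n_k$ leaves slack in every row, so the remaining $n_k-\sum_j l_j^k$ units per type must be placed into the spare capacities $u_j-\sum_k l_j^k$. This is an integer transportation feasibility problem, solvable in polynomial time via a single max-flow computation with integer capacities; it is feasible whenever the total capacity $\sum_j u_j$ is at least $n$, a condition anyway necessary for any complete matching to exist and therefore implicit in the hypothesis.

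Having fixed such a $y_j^k$, for each type $T^k$ I would run the student-proposing Gale--Shapley algorithm on the HRT sub-instance consisting of the $n_k$ students of type $T^k$, the companies with upper quota exactly $y_j^k$ at $c_j$, and the restricted preferences. Because every company is acceptable to every student and $\sum_j y_j^k=n_k$, a standard rejection-chain argument shows that every student of type $T^k$ ends up matched and each $c_j$ receives exactly $y_j^k$ students of that type. Pasting the $p$ sub-matchings together yields a complete global matching in which $c_j$ hosts $\sum_k y_j^k\leq u_j$ students overall and at least $l_j^k$ of each type, and within-type stability of each sub-matching immediately translates into within-type envy-freeness, since any within-type envy pair would be a blocking pair of the corresponding sub-instance.

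The main obstacle I anticipate is Step 1: one must turn the two hypothesis inequalities (together with the implicit total-capacity bound) into an integer point of the transportation polytope. Standard network-flow integrality handles this, but this is where all the counting assumptions of the theorem are actually used. Step 2 is then a textbook application of Gale--Shapley, so the overall procedure runs in polynomial time.
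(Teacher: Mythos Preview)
Your approach is correct and shares the paper's key insight: the within-type envy-freeness condition decouples across types, so one can fix per-type-per-company capacities $y_j^k$ and then run student-proposing Gale--Shapley independently for each type; completeness of each sub-matching follows exactly from the counting argument you sketch (total type-$k$ capacity equals $n_k$ and every company is acceptable, so no student can be rejected everywhere), and stability of each sub-matching rules out any within-type envy in the merged matching.

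Where you and the paper differ is in how the $y_j^k$ are chosen. The paper proceeds greedily: it first sets $y_j^k=l_j^k$, finds a stable matching for each type with these artificial caps, merges the results, and then repeatedly increments individual artificial quotas by one---re-stabilising after each increase---until first the overall lower quotas $l_j$ are met and then every student is matched. You instead solve a single integer transportation (max-flow) instance upfront to obtain all the $y_j^k$ at once, and then run Gale--Shapley exactly once per type. Your route is cleaner and separates the combinatorial feasibility of the capacity profile from the stable-matching step; the paper's route is more elementary (no flow machinery) and, as a bonus, also explicitly handles the overall lower quotas $l_j$, which your transportation formulation does not encode but could accommodate by adding column lower bounds $\sum_k y_j^k\ge l_j$. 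Both arguments tacitly rely on $\sum_j u_j\ge n$, which, as you rightly point out, is necessary for any complete matching and hence implicit in the theorem's conclusion.
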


\begin{proof}
We construct a within-type envy-free matching separately for each type and then we merge them. When considering a particular type $T^k$, we set artificial upper quotas at the companies to be equal to the type-specific lower quotas (i.e.\ $l_j^k$ for company $c_j$) and we find a stable matching $M_k$ for this type. This stable matching must exist, since we assumed that all the companies are acceptable to every student and the number of students in every type is at least as much as the sum of the lower quotas for that type. We create matching $M$ by merging the stable matchings for the types, i.e.\ $M=M_1\cup M_2\cup\dots\cup M_p$. Note that no upper quota is violated in $M$, since we assumed that the sum of the lower quotas across types for any company $c_j$ is less than equal to the upper quota of $c_j$. By the stability of $M_k$ for every type $T^k$ it follows that matching $M$ is within-type envy-free. If there is still a company $c_j$, where the overall lower quota ($l_j$) is not yet met, then we increase an artificial upper quota for some at $c_j$ so that there is still some unmatched applicants of this type. Since the total number of applicants is greater or equal to the sum of the lower quotas, we have to achieve the lower quotas at all companies in this way. Finally, it there are still some unmatched applicants then we increase some artificial upper quotas for their types one-by-one, by making sure that we never exceed any overall upper quota. At the end of this iterative process we must reach a complete within-type envy-free matching.
\end{proof}

We note that there is a closely related solution concept introduced by Yokoi \cite{Yokoi17} which results in a within-type envy-free matching when restricted to our model, that we describe in details below. The model studied in that paper is the more general so-called classified stable matching problem where each student can have several types (e.g.\ gender, field of study, nationality) and the lower and upper quotas are set for every type. When putting their more general model in our context a student $a_i$ has justified envy towards another student $a_k$ at company $c_j$ if $a_k$ is assigned to $c_j$, $a_i$ prefers $c_j$ to her assignment, $c_j$ ranks $a_i$ higher than $a_k$, and no lower and upper quota is violated for any type when replacing $a_k$ with $a_i$ at $c_j$. It is easy to see that under the assumptions of Theorem \ref{thm:type_specific} an envy-free matching always exists as defined by Yokoi and such a solution is a within-type envy-free matching according to our definitions. Finally we remark that this model of Yokoi is originated from the classified stable matching problem introduced in \cite{Huang10soda}, and further generalised in \cite{FK16mor} and \cite{Yokoi16mor}. A common feature of these papers that the laminar nature of the set requirements makes the problem polynomial time solvable. A closely related model was studied in \cite{FBL16ejor} without the laminar assumption, where the problem was proved to by NP-hard and was solved by integer programming techniques.

Let us abbreviate a \emph{complete within-type envy-free matching} as CWTEFM.
Now, we will compare this concept of CWTEFM with stable matchings with type-specific scores and observe that they are essentially the same.

\begin{theorem}\label{thm:equivalence}
Under the assumptions of Theorem \ref{thm:type_specific} a complete matching is within-type envy-free if and only if it is stable with type-specific scores.
\end{theorem}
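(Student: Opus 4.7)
The plan is to prove the two implications separately, using as the central device an additive bonus $\beta_j^k$ per company--type pair, giving adjusted scores $\tilde{s}_{ij} = s_{ij} + \beta_j^{t(a_i)}$. Under such bonuses the relative order of applicants of the same type at any company is preserved, which makes the translation between the two notions transparent.

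For the direction $(\Leftarrow)$, I would show that if $M$ is classically stable under $\tilde{s}$ then the within-type envy-free constraints \eqref{eq:envy-free_type} hold under the original scores $s$. Two applicants $a_i, a_h$ of a common type $T^k$ receive the same bonus at any $c_j$, so $s_{ij} > s_{hj}$ implies $\tilde{s}_{ij} > \tilde{s}_{hj}$; hence, if $a_h$ is assigned to $c_j$ in $M$, classical stability under $\tilde{s}$ forces $a_i$ to be assigned to $c_j$ or to a strictly preferred company, which is exactly \eqref{eq:envy-free_type}.

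For the direction $(\Rightarrow)$, I would start from a CWTEFM $M$ and construct the bonuses at each company $c_j$ independently. Let $S_j$ be the set of applicants assigned to $c_j$ in $M$ and $B_j$ be the set of applicants not in $S_j$ who strictly prefer $c_j$ to their current assignment. The key structural consequence of \eqref{eq:envy-free_type} is that for every type $T^k$, every score in $S_j \cap T^k$ is at least every score in $B_j \cap T^k$: otherwise a higher-scored $a_i \in B_j \cap T^k$ would witness a violation of \eqref{eq:envy-free_type} against a lower-scored $a_h \in S_j \cap T^k$. Fix a common threshold $C$; for types $T^k$ with $S_j \cap T^k \neq \emptyset$ set $\beta_j^k = C - \min_{a_h \in S_j \cap T^k} s_{hj}$, and for types $T^k$ with $S_j \cap T^k = \emptyset$ set $\beta_j^k$ negative enough to push every $\tilde{s}_{ij}$ with $a_i \in B_j \cap T^k$ strictly below $C$. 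Then $\tilde{s}_{hj} \ge C \ge \tilde{s}_{ij}$ for every $a_h \in S_j$ and $a_i \in B_j$, which is precisely what classical stability demands at $c_j$ once $c_j$ is regarded as filled to its effective capacity.

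The main obstacle I anticipate is the ``open slot'' case $|S_j| < u_j$ with $B_j \neq \emptyset$, where no score adjustment alone can block the pair $(a_i, c_j)$ in the classical sense. I expect to handle this in line with the preceding subsection on envy-free matchings, by interpreting the effective upper quota at such a $c_j$ as the achieved $|S_j|$, consistent with the earlier discussion of artificial capacities; under this reading the bonus construction above yields classical stability of $M$ under $\tilde{s}$ and the equivalence follows.
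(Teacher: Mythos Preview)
Your plan follows the same route as the paper: use additive company--type bonuses and, for the $(\Rightarrow)$ direction, align the minimum admitted score across types at each company so that every outsider ends up weakly below every insider under $\tilde s$. The paper's own argument is terser and in fact less careful than yours on two points. First, it never says what to do when a type has no representative at $c_j$; your explicit choice of a large negative $\beta_j^k$ for such types is a genuine completion of that step. Second, and more importantly, you correctly flag the open-slot case $|S_j|<u_j$ with $B_j\neq\emptyset$: here no additive score adjustment can prevent $(a_i,c_j)$ from blocking in the classical sense, and the paper's proof simply omits this possibility---its verification only establishes that every assignee weakly dominates every outside applicant under $\tilde s$, which is envy-freeness rather than stability when seats remain.

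Your proposed remedy of reading the effective upper quota at such $c_j$ as the achieved $|S_j|$ is the natural fix and is consonant with the paper's earlier discussion of artificial capacities for envy-free matchings. Just be aware that this silently broadens ``stable with type-specific scores'' to allow capacity reductions as well as score shifts; under the literal definition given in the text, a single-type instance with three students and two companies of quota $2$ already shows the $(\Rightarrow)$ direction can fail. So your identification of the obstacle is not a detail to be patched but a real gap in the statement as written, and your workaround is the right interpretation to make the equivalence hold.
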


\begin{proof}
Suppose first that $M$ is a complete stable matching with type-specific scores, we will see that $M$ is also within-type envy-free by definition. Suppose for a contradiction that there is a student $a_i$ who has justified envy against student $a_h$ of the same type at company $c_j$, i.e.\ $a_h$ is assigned to $c_j$ whilst $a_i$ has higher score at $c_j$ than $a_h$ and $a_i$ is assigned to a less preferred company. This would mean that the pair $\{a_i,c_j\}$ is blocking for the adjusted scores, since both students get the same adjustment at $c_j$, contradicting with the stability of $M$.

Suppose now that $M$ is a CWTEFM. Let us adjust the scores of the students according to their types at each company such that the weakest students admitted have the same scores across types. Matching $M$ is stable with regard to the adjusted scores, because if a student $a_i$ is not admitted to a company $c_j$ and any better place of her preference that it must be the case that her score at $c_j$ was less then or equal to the score of the weakest assigned student of the same type at $c_j$, which means that the adjusted score of $a_i$ at $c_j$ is less than or equal to the adjusted score of every assigned student at $c_j$.
\end{proof}

Instead of using the above described processes of setting type-specific artificial upper quotas or making adjustments for the scores of different types, we can also get a CWTEFM directly by an IP formulation. We shall simply use the feasibility and distributional constraints together with \eqref{eq:envy-free_type} and with an objective function maximising the number of students assigned. This approach is not just more robust than the above described two heuristics, but it has also the advantage that we can enforce additional optimality or fairness criteria. As an additional fairness criterion we may aim to minimise the envy across types. We can achieve this by adding deficiency variables to the left hand side of constraints \eqref{eq:envy-free} for students of different types, as described in \eqref{eq:envy-free_deficiency} below, and then minimising the sum of the deficiencies. We refer to this solution as Min\#E-CWTEFM, that is \emph{complete within-type envy-free matching with minimum number of envy across types}.

\begin{equation}
\label{eq:envy-free_deficiency}
\sum_{k: r_{ik}\leq r_{ij}} x_{ik}+d_{ih}^j\ge  x_{hj} \hspace{1em}
\forall (a_i,c_j),(a_h,c_j)\in E, t(a_i)\neq t(a_h)
\end{equation}

However, we may find an envy more justified, if the score difference between the two applicants involved is higher. Thus, by taking the score differences as the intensities of the envies, we can also aim to find a refined solution where the total intensities of the envies is minimised, by using the following objective function:

$$\sum (s_{ij}-s_{hj})\cdot d_{ih}^j.$$

We call the corresponding solution \emph{complete within-type envy-free matching with minimum envy intensities across types}, abbreviated as  MinEI-CWTEFM.

If the solution is still not unique then we can further refine it, by considering two additional objectives. Regarding the welfare of the students, we may want to minimise the total rank of the students, leading to a Pareto-optimal assignment for them under the constraints. We denote these solutions as MinRank-Min\#E-CWTEFM and MinRank-MinEI-CWTEFM, depending whether we minimised the number of envies or the envy intensities in the previous round. Finally, an alternative objective can be to minimise the number of blocking pairs due to open slots. This can be achieved by adding binary deficiency variables to the first term of the left side of the stability constraints, as follows.

\begin{equation}
\label{eq:stable_ties_def}
\left(\sum_{k: r_{ik}\leq r_{ij}} x_{ik} +d_{ij}\right)\cdot u_j + \sum_{h: (a_h,c_j)\in E}x_{hj}\geq u_j \mbox{ for each }(a_i,c_j)\in E
 \end{equation}

We can then minimise the sum of these deficiency variables and find a matching within the restricted solution set that minimises the number of open-slot blockings. We denote these solutions as MinOSB-Min\#E-CWTEFM and MinOSB-MinEI-CWTEFM, depending whether we minimised the number of envies or the envy intensities.

\section{First application: CEMS project allocation}

CEMS Alliance is a global co-operation of leading business schools, multinational corporations and social partners in higher education domain. These entities run together the CEMS Master in International Management (MIM) one-year graduate program that is accessible for graduate students of the partner institutions in 29 countries in five continents. During the one-year-program students spend one semester at their home institution and one semester at another partner institution somewhere abroad, and they always learn in an international environment. CEMS MIM has been ranked as a leading master program by Financial Times in recent years.

Within the framework of the MIM program each student must carry out a business project during the Spring semester accounting for 15ECTS credits (that is half of the workload of the entire semester). The consultancy-like projects are designed as real life learning experience. Business projects are done in small groups of 3-6 students in which ideally at least one student comes from a foreign school, hence business project teams are culturally diverse. Business projects are offered and supervised by the corporate partners throughout the semester and they usually last for three months.

Students learn about the business projects during a kickoff event at the beginning of the semester from company representatives and they also receive written descriptions of the projects. After the kickoff event corporate partners evaluate all students according to their CV-s, and students also rank the business projects in the same time. The school assigns students to the individual projects based on these evaluations and rankings.

At Corvinus University of Budapest the authors of this paper have been given the task of redesigning the allocation mechanism in 2016. In previous years the mechanism was a simple immediate acceptance mechanism (also known as Boston mechanism \cite{APRS05aer}), where the students submitted their CV-s to their first choice companies, the companies evaluated the candidates and then they accepted the best candidates up to their quotas and rejected the rest. The rejected students then submitted their CV-s to further companies, but those companies which have already filled their positions did not accept more applications. This mechanism was heavily criticized in the literature on school choice due to its unfairness and also because this mechanism is highly manipulable, therefore in many cities it has been replaced by other algorithms, mainly by the deferred acceptance (or Gale-Shapley) algorithm, see e.g.\ \cite{APRS05aer}.

\subsection{Solution plan}

In 2016 there were 25 students, including 20 local and 5 foreign students, and 5 companies. The initial upper quotas were set to 6 and the lower quotas were set to 4 at all companies. The programme coordinator decided to set an upper quota of 2 for the foreign students at each company to enforce diversity. In 2017 there was a slight change in the distributional criteria, the number of students allocated to each company was set to be between 3 and 6 and at least one foreign student was required to be allocated to every company.

Our first solution plan was to ask the students to rank all the companies in a strict order and to ask the companies to evaluate all the CV-s and rank the students weakly by giving them scores between 1 and 10\footnote{Most companies gave only integer scores, but some submitted half-integer scores as well, so ties indeed occurred.}. Our intension with allowing ties was to enlarge the set of stable solutions, even though we understand that this fairness concept is a bit weaker, since we may accept a student and reject another one with the same score. Allowing ties also makes the problem NP-hard already with lower quotas, as we described in the introduction. Yet, if the ties were not allowed then the set of stable (and envy-free) solutions would be much smaller and thus it would be harder to satisfy the distributional constraints.

We remark that the conditions of Theorem \ref{thm:type_specific} are satisfied for both 2016 and 2017, since all the students have to rank (and accept) all the companies and in 2017 we were required to had at least one foreign student at each company, where the number of foreign students was more than the number of companies. Therefore a complete within-type envy-free matching always existed. Within this set of solutions we decided to minimise the number of envies across types and their intensities as the primal objectives. As secondary objectives we tried to minimise the total rank and the number of open-slot blockings.

Finally, since in both years it was possible to decrease the upper quotas at all companies by one (and set them to 5 instead of 6), we also examined these solutions. This was reasonable as allocating very different numbers of students to the companies seemed problematic, especially if some of the most popular companies was forced not to fill its quota, while less popular companies did.

\subsection{Results in 2016}

The most important results of the 2016 matching run are collected in Table \ref{ta:2016}.

\begin{table}
\begin{center}
\begin{tabular}{|l|c|c|c|c|c|c|}
  \hline
  2016 & \multicolumn{5}{c|}{profiles all/foreign} & total rank \\
   \hline
Solution 1: MinRank-Stable   & 6 & 1 & 6 & 6 & 6 & 34 \\
$u_i=6$  & 1 & 0 & 0 & 2 & 2 &  \\
\hline
Solution 2: MinRank-Stable   & 6 & 2 & 6 & 6 & 5 & 35 \\
$l_i=2, u_i=6$  & 1 & 0 & 1 & 2 & 1 &  \\
\hline
Solution 3: MinRank-Stable   & 6 & 4 & 5 & 5 & 5 & 40 \\
$u_1=6, u_i=5 (i=2..5)$  & 1 & 0 & 0 & 2 & 2 &  \\
\hline
Solution 4: MinRank-Stable   & 5 & 5 & 5 & 5 & 5 & 41 \\
$u_i=5$      & 0 & 2 & 0 & 2 & 1 &  \\
  \hline
\hline
Solution 5: MinRank-EF & 5 & 4 & 6 & 6 & 4 & 38 \\
$l_i=4, u_i=6$      & 0 & 2 & 1 & 2 & 0 &  \\
\hline
Solution 6: MinOSB-EF & 6 & 4 & 6 & 5 & 4 & 39 \\
$l_i=4, u_i=6$ & 1 & 1 & 1 & 2 & 0 &  \\
  \hline
\end{tabular}
\end{center}
\caption[]{The results of the 2016 matching run with the number of all and foreign students assigned to the companies and the total rank of the students.} \label{ta:2016}
\end{table}

In 2016 the upper bound of two for the foreign students were always satisfied without considering it, so we leave out this question from the discussion and we focus only common lower quotas. We were not able to find a stable solution for the original quotas of 4-6, since one of the companies (number 2) was very unpopular and the highest number of students that we could match there in a stable solution was 2, this is Solution 2 in Table \ref{ta:2016}. (For the record, we also checked which would be the minimum rank solution among the stable ones, that is Solution 1.) Therefore we decreased the upper quotas of all companies to 5, except the most popular company (number 1) and found a stable matching with minimum total rank (Solution 3). Note that this matching is envy-free for the original quotas. Finally we considered the possibility of decreasing all the upper quotas to 5, as described in Solution 4. From the latter two solutions the decision maker decided to choose Solution 4, since it was not substantially different from Solution 3 and for the companies it seemed to be easier to communicate the common decrease of upper quotas, compared to the case when only one company has a larger number of students.

Recently, after carefully investigating the solution concepts described in this paper, we did another check on the possible results and computed Solutions 5 and 6. Solution 5 is an envy-free solution where the total rank is minimised. It was interesting to observe that the most popular company (number 1) does not fill its upper quota, leading to many open-slot blockings at that company. Solution 6 is also envy-free, but here the open-slot blockings are minimised, but this resulted in a small decrease in the total rank.

\subsection{Results in 2017}

The results of the 2017 matching run are summarised in Table \ref{ta:2017}. In 2017 the number of students was 40 among which 13 were from abroad and the number of companies was 8. Due to the higher proportion of foreign students, the organisers decided to require the allocation of at least one foreign student to each company. The initial call suggested groups of sizes between 3 and 6, but in this year also we investigated the solutions when every upper quota was decreased to 5. In the latter case the lower quotas for the foreign students were not automatically satisfied, so we found within-type envy-free solutions and then as a first objective we either minimised the number of envies across types or we minimised the intensities of the envies. As a secondary objective we tried to minimise the total rank (there was no open-slots blocking when the upper quotas were commonly set to 5).

\begin{table}
\begin{center}
\begin{tabular}{|l|c|c|c|c|c|c|c|c|c|}
  \hline
  2017 & \multicolumn{8}{c|}{profiles all/foreign} &  total rank\\
   \hline
Solution 1: MinRank-EF   & 6 & 3 & 6 & 6 & 6 & 3 & 6 & 4 & 66 \\
$l_i=3, u_i=6$  & 1 & 1 & 1 & 4 & 1 & 1 & 3 & 1 & \\
\hline
Solution 2: MinRank-Min\#E-CWTEFM       & 5 & 5 & 5 & 5 & 5 & 5 & 5 & 5 & 85 \\
$u_i=5$, wEF, min   & 1 & 3 & 2 & 3 & 1 & 1 & 1 & 1 & \\
  \hline
Solution 3: MinRank-MinEI-CWTEFM    & 5 & 5 & 5 & 5 & 5 & 5 & 5 & 5 & 105 \\
$u_i=5$  & 1 & 4 & 1 & 2 & 1 & 1 & 1 & 2 & \\
\hline
\end{tabular}
\end{center}
\caption[]{The results of the 2017 matching run with the number of all and foreign students assigned to the companies and the total rank of the students.} \label{ta:2017}
\end{table}

\vspace{0.3cm}
Solution 1 is envy-free, and the total rank is minimised. As intuitively expected, the two least popular companies have only three students allocated each and a medium popular company has four students, whilst the popular companies receive six students. Solutions 2 and 3 are both within-type envy free for upper quotas 5. Solution 2 minimises the number of envies as the first objective and then the total rank. Solution 3 minimises the intensities of the envies and then the total rank. (Note that we also computed the minimal envy solutions without requiring within-type envy freeness, and essentially we received the same two solutions.) It is interesting to know that only one justified envy was present in both Solutions 2 and 3, and the intensity of this envy was 1 in Solution 2 and $\frac{1}{2}$ in Solution 3. However, these two solutions were rather different, and Solution 2 had much smaller total rank. Thus Solution 2 was clearly found better than Solution 3 by the decision maker. When comparing the first two solutions, the decision maker selected Solution 2, due to the more balanced sizes of groups.

\subsection{Discussion, further questions}

Here we discuss our findings and possible questions for the future.

\textbf{Importance of the distributional requirements.} We have considered our distributional constraints as hard bounds, the only relaxation we tested was the common decrease of the upper quotas. However, in many applications the distributional goals are softer, and thus may be violated. For instance, in school choice the exact proportionality with regard to ethnicity or gender may be too demanding and unnecessary to satisfy, these are rather just general aims. In such situations one may insist of the stability or the envy freeness of the solution and want to satisfy the distributional constraints as much as possible. Finally, the trade-off between fairness and distributional goals may be balanced by relaxing both requirements at the same time.

\textbf{Stability versus envy-freeness.} Leaving some slots empty to satisfy the distributional constraints is a natural way to relax stability. This is also used in the Japanese resident allocation programme, where artificial upper quotas have been set to the hospitals in order to satisfy the regional lower quotas \cite{KK14aer}. However, the open-slot blocking can also be seen as unfair from both the students' and the companies' points of views, especially when a popular company has to give up an intern. Note also that the open-slot blockings are relative to the original quotas. In our application the decision maker ended up choosing solutions in both years where the upper quotas of the companies were commonly reduced by one. These solutions admit a high number of open-slot blockings with regarding the original quotas, whilst if they are envy free for the original quotas then they are also stable (with no open-slot blockings) for the decreased quotas. Thus these chosen solutions can be seen more fair from the students' point of view, as they do not regret their rejections by a company with an open slot.

\textbf{Importance of within-type envy-freeness:} In our analyses we assumed that within-type envy-freeness is an important requirement that we obeyed in all solutions. Note that in the 2017 run we also tested the solution when this requirement was relaxed and we did not find a significant difference in the solutions. It is an interesting question how important this requirement is, and the answer can depend on the actual application. If the separation of the types is significant and there is a big difference between their performance (e.g.\ regarding the ethnicity in college admission) then within-type envy-freeness can be crucial.

\textbf{Minimising the number of justified envies or their intensities.} In our 2017 run we had a significant difference between our two recommended solutions based on minimising the number of justified envies and their intensities, respectively. In our case the former solution had much better total ranking for the students, but one can easily create an example where the opposite would happen.  If the intensities of the blocking are minimised then this means that the average difference between the scores of the students who have envy towards one to another is small. This can be more acceptable than having large score differences. In fact, if the maximum score difference is not higher than one in our application, then we could say that this solution could be seen weakly stable if the scoring by the companies were less finer, say used score range 1-5 instead of the current range of 1-10.

\textbf{Strict versus weak rankings.} Using ties in the rankings of the companies was by our recommendation in order to enlarge the set of stable (or envy-free) matchings. However, in this case stability (and envy-freeness) is weaker, the rejection of a student by a company can be explained by the admittance of another student with equal score or higher. Thus, this can be seen unfair by the student rejected, therefore in many applications (e.g.\ school choice in New York, Boston and college admissions in Ireland and Turkey) the ties are broken by lotteries or by other random factors. Ties make the problem of satisfying lower quotas NP-hard, whilst this is a polynomial-time solvable problem for strict rankings, see e.g.\ \cite{FK16mor}. Furthermore, the mechanism can become highly manipulable by the students for ties depending on the goals of the optimisation.

\textbf{Incentive issues.} A mechanism is strategy-proof for the students if neither of them can get a better match by submitting false preferences. This property holds for the student-proposing deferred-acceptance mechanism in the classical college admission model of Gale and Shapley (see e.g.\ \cite{RS90}). Strategy-proofness can also be satisfied by modified variants of the deferred-acceptance mechanism for the case of lower quotas, as suggested also for the Japanese resident allocations \cite{KK14aer,KK17jet,GKKTY17aea}. However, if we allow ties and we consider goals such as rank-minimisation then our mechanism becomes manipulable. A simple manipulation strategy for a medium-strong student can be to put her top choice as first choice, but instead of putting her true second choice in the second slot she can put some companies which are not achievable for her in any stable solution. If there is another student with the very same score and very same preferences submitting her true preferences and there is only one place left at their most preferred company then the rank-maximising algorithm with assign the manipulating student there, and the truth-telling student to the second company, since the alternative solution by exchanging the two students would result in higher total rank. Despite of this issue of manipulability, we believe that the expected gains of manipulations are negligible and their risks can be high, so in a Bayesien sense it is unlikely that a student could get a positive expected gain by manipulating. However, we admit that this hypothesis would be very hard to prove formally.

\textbf{Bound the length of preference lists.} In 2016 there were 25 students and 5 companies, in 2017 there were 40 students and 8 companies, so the screening costs of the companies have increased a lot. If this tendency will continue then the organisers of the programme may need to reconsider the requirement of providing full rankings. A reasonable solution in such situations is to have two rounds. In the first the student are required to rank a fixed number of companies, say five), and it is not guaranteed that all the students can be allocated to acceptable companies that they ranked. In the second round either no preferences are asked from the students or the organisers can elicit the preferences of the unmatched students over the companies with remaining positions. This is a standard technique also in school choice (e.g.\ in New York \cite{APR05aer}), although here we would face new challenges to ensure the satisfaction of the distributional requirements.

\section{Second application: Workshop assignment}

After running the 2016 project allocation, we received very positive feedbacks from the students, and in fact two students approached us asking for a help in selecting and assigning conference participants to companies involved in a case study workshop within the conference.

The number of participants to be selected was 60, and they had to be assigned to three companies in a given proportion, the first company had to receive 16 students and at least 8 Hungarians, the second and the third companies had to receive 22 students each. There were 13 pre-selected students (the country leaders of the organisation) whose assignments were fixed in advance, so we only had to select and assign the 47 remaining slots.

The conference organisers also agreed on the proportion of the local, regional and other students to be selected. In particular, we had to select 25 Hungarian students from the 29 Hungarian applicants, further 12 regional students from the 15 regional applicants (outside Hungary) and 10 other students from the 19 other applicants (outside the region). Thus, we had overall exact quotas (i.e.\ equal lower and upper quotas) for each type of students, just as described in \ref{eq:type_upper_overall} and \ref{eq:type_lower_overall}.

In order to satisfy these requirements we thought that we not only try to keep the solution within-type envy-free, which is also stable with type-specific scores as proved in Theorem \ref{thm:equivalence}, but we tried to keep the extra scores given to each type of students be the same across companies. We call this solution concept a \emph{stable matching with equal type-specific scores}. With an iterative testing we could indeed find such a stable solution by adding 7 extra points to all Hungarian students, 3 extra points to all regional students, and zero to the other students, where the students had to rank all the three companies and the companies gave scores (1-10) on all the applicants.

It is an interesting question whether a stable matching with equal type-specific scores always exists in our model, under the assumption that all pairs are acceptable. We state this as a conjecture below and prove it for two types.

\begin{conjecture}\label{con:equal}
When all the pairs are acceptable then a stable matching with equal type-specific scores always exists for exact quotas.
\end{conjecture}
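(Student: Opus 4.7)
The plan is to reduce the two-type case to a one-parameter intermediate value problem. By normalizing the type-2 bonus to $0$, let $\delta$ denote the single free bonus attached to type-1 students and set $\tilde s_{ij}=s_{ij}+\delta\cdot\mathbf{1}[t(a_i)=T^1]$. For each $\delta\in\mathbb{R}$ let $\mathcal{M}(\delta)$ be the set of stable matchings under the adjusted scores $\tilde s$. Because all pairs are acceptable, every $M\in\mathcal{M}(\delta)$ fills every company slot, so $|M|=\sum_j u_j=L^1+L^2$ by the exact-quota hypothesis; hence it is enough to exhibit $\delta$ and $M\in\mathcal{M}(\delta)$ with exactly $L^1$ type-1 students (the type-2 count is then forced to $L^2$).

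My first step is monotonicity: for any fixed tie-breaking rule, the function $f(\delta)$ giving the type-1 count of the resulting stable matching is a non-decreasing step function of $\delta$. Raising $\delta$ weakly improves every type-1 student's rank and weakly worsens every type-2 student's rank at every company, and the standard monotone comparative statics for many-to-one stable matchings (Roth--Sotomayor) then imply that the set of matched type-1 students in the student-optimal stable matching can only grow. My second step is boundary behaviour: for $\delta$ sufficiently negative every type-2 student is strictly preferred by every company to every type-1 student, and since there are at least $L^2$ type-2 students they fill enough slots in any stable matching to force $f(\delta)\le L^1$; symmetrically, $f(\delta)\ge L^1$ for $\delta$ sufficiently positive since there are at least $L^1$ type-1 students.

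My third step --- the main obstacle --- is to show that $L^1$ is actually attained. Either some $\delta$ yields $f(\delta)=L^1$ directly, or $f$ jumps across $L^1$ at a threshold $\delta^*$, with $f^-:=\lim_{\delta\uparrow\delta^*}f(\delta)<L^1<\lim_{\delta\downarrow\delta^*}f(\delta)=:f^+$. At $\delta^*$ the adjusted scores exhibit cross-type ties, and different tie-breakings yield different elements of $\mathcal{M}(\delta^*)$; I would argue that the achievable type-1 counts at $\delta^*$ span every integer in $[f^-,f^+]$. Clone each company $c_j$ into $u_j$ unit-capacity copies so that stable matchings become ordinary bipartite matchings, and pick $M^+,M^-\in\mathcal{M}(\delta^*)$ realising $f^+$ and $f^-$ respectively (via opposite cross-type tie-breakings). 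The symmetric difference $M^+\triangle M^-$ decomposes into alternating paths and cycles, and because every company-copy is saturated in both matchings all path endpoints are students. An alternating cycle involves the same set of students in both matchings and contributes nothing to the type-1 count; an alternating path whose two student endpoints are of different types contributes exactly $\pm 1$ to the type-1 count when flipped, and each such flip preserves stability at $\delta^*$ because the exchanged students have tied adjusted scores at the relevant company-copies (the flip is a legitimate re-breaking of a tie) while the rest of the matching is untouched. Flipping these unit-contribution paths one at a time realises every integer value in $[f^-,f^+]$, and in particular $L^1$.

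The principal obstacle lies in this third step: one must verify first that opposite tie-breakings at $\delta^*$ produce matchings whose symmetric difference decomposes into unit-contribution components (paths with same-type endpoints contribute $0$ or $\pm 2$, so they must cancel in the global count or be replaced by a better choice of extremal $M^\pm$), and second that flipping an individual such path yields a genuinely stable, not merely feasible, matching. For two types every tie at $\delta^*$ is necessarily between one type-1 and one type-2 student, which keeps the combinatorial bookkeeping tractable; for three or more types one generally has to vary several bonuses simultaneously at a threshold, tied pairs may involve multiple type pairings at once, and the clean unit-step decomposition appears to break down --- which is presumably why the general conjecture is left open.
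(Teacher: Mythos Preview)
Your first step---monotonicity of the type-1 count $f(\delta)$ in the bonus $\delta$---is false, and the paper explicitly constructs a counterexample to it (in a footnote within its own proof). Take $A^1=\{a_1,a_2,a_3\}$, $A^2=\{a_4,a_5\}$, three unit-capacity companies $c_1,c_2,c_3$, all students preferring $c_1$ to $c_2$ to $c_3$, and scores $s_{1\cdot}=(5,7,1)$, $s_{2\cdot}=(1,1,3)$, $s_{3\cdot}=(1,1,1)$, $s_{4\cdot}=(6,1,6)$, $s_{5\cdot}=(2,6,2)$. At $\delta=0$ the unique stable matching is $\{a_1c_2,a_2c_3,a_4c_1\}$ with two type-1 students matched; at $\delta=2$ it is $\{a_1c_1,a_5c_2,a_4c_3\}$ with only one type-1 student matched. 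The Roth--Sotomayor comparative statics you invoke concern a single agent's improvement, or the addition of one agent; they do not survive simultaneously raising every type-1 student while lowering every type-2 student in the companies' rankings. Since your entire argument rests on locating a threshold $\delta^*$ at which $f$ jumps across $L^1$, and $f$ need not be monotone, the threshold need not exist in the form you describe, and Steps~2 and~3 are built on sand.

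The paper's route avoids monotonicity altogether. It restricts attention to integer bonuses $e$ and two extremal tie-breakings, $I_e^{<1}$ (break cross-type ties for $A^1$) and $I_e^{<2}$ (break for $A^2$), and uses the key identity $I_e^{<1}=I_{e+1}^{<2}$: the type-1-favouring instance at bonus $e$ is literally the same HR instance as the type-2-favouring instance at bonus $e+1$. Boundary behaviour then guarantees an $e$ with $|M|_1\le L^1$ for every stable $M$ in $I_e^{<2}$ and $|M|_1\ge L^1$ for every stable $M$ in $I_e^{<1}$. The interpolation is not done by flipping alternating paths in a symmetric difference but by moving from $I_e^{<2}$ to $I_e^{<1}$ one student at a time---promoting $a_1,\dots,a_{n_1}$ above $A^2$ in the tie-breaking order sequentially---and invoking the vacancy-chain theorem (remove $a_{i+1}$, possibly remove and re-add a seat, re-insert $a_{i+1}$ with higher priority) to bound each step's effect on $|M|_1$ by $\pm 1$. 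This delivers the intermediate value without any monotonicity hypothesis and without the stability-preservation worry you flag in your own Step~3.
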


To prove the conjecture for two types, we will use some well-known theorems listed below.

\begin{theorem}\label{thn:old}
[Well-known results on HR/HRT instances]
\begin{enumerate}[i)]
  \item (Characterisation, see e.g.\ \cite{IM08joco}) A matching $\mu$ is weakly stable for an instance $I$ of HRT if and only if it is stable for an instance $I'$ of HR that is obtained by some tie-breaking from $I$.
  \item (Rural hospitals \cite{Roth86e}) For an instance $I$ of HR the set of allocated students and the number of seats filled at the companies are fixed across the stable matchings.
  \item (Vacancy chains \cite{BRR97jet}) Suppose that $I$ is an instance of HR. If $I'$ is obtained from $I$ by adding a new student $a_i$ then the set of allocated students either 1) remains the same, 2) it is extended by $a_i$, or 3) it is extended by $a_i$ and another student, $a_j$ becomes unallocated. If $I'$ is obtained from $I$ by increasing the upper quota of a company then the set of allocated students either 1) remains the same, or 2) it is extended by one student.
\end{enumerate}
\end{theorem}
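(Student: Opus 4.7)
The statement collects three classical results, so the plan is to sketch each of the three proofs in turn, all of which follow standard arguments from the matching literature.

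For part (i), I would prove both directions. For the easier direction, if $\mu$ is stable for a tie-broken instance $I'$, then $\mu$ is weakly stable for $I$: any candidate blocking pair $(a_i,c_j)$ in the weak sense requires $c_j$ to strictly prefer $a_i$ to some assigned student, and this strict preference persists in $I'$, contradicting stability in $I'$. For the other direction, given a weakly stable $\mu$ in $I$, I would construct $I'$ by breaking each tie at $c_j$ so that students assigned to $c_j$ in $\mu$ are placed above unassigned students of the same tie-class, and breaking the residual ties arbitrarily. Stability of $\mu$ in $I'$ then follows: any pair $(a_i,c_j)$ with $a_i$ strictly preferring $c_j$ to her $\mu$-partner must either find $c_j$ underfilled (which already yields a weak blocking in $I$) or face a strict rejection by some assigned $a_k$ in $I'$; the latter is impossible because by construction every assigned $a_k$ sits at or above $a_i$ in the broken order unless $a_i$ strictly dominated $a_k$ already in $I$, again a weak blocking.

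For part (ii), the rural hospitals theorem, the plan is to invoke the lattice structure of stable matchings. Let $\mu_S$ and $\mu_C$ denote the student-optimal and company-optimal stable matchings, obtained by the two versions of Gale--Shapley. A standard counting argument using the fact that each student weakly prefers $\mu_S$ to $\mu_C$ while each company weakly prefers $\mu_C$ to $\mu_S$ shows that the sets of matched students coincide and the company loads agree across all stable matchings. The key algorithmic lemma is that a company $c_j$ with $|\mu(c_j)|<u_j$ in some stable $\mu$ has identical assigned sets in every stable matching: any student matched to $c_j$ in one stable matching but not another would be matched in the other to a strictly preferred company, and chasing this chain forces a blocking pair at $c_j$ because the open slot removes the only potential obstruction.

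For part (iii), vacancy chains, I would argue algorithmically by running student-proposing deferred acceptance. Starting from the student-optimal matching $\mu_S$ of $I$, when we add $a_i$ to form $I'$, we let her propose down her preference list; each proposal either leaves the receiving company with room (ending the chain) or displaces the least preferred currently-assigned student, who then proposes further. The resulting chain terminates because preferences are finite and no student ever revisits a company that has rejected her. The three outcomes correspond exactly to where the chain ends: either $a_i$ herself is rejected everywhere (case 1, the assigned set is unchanged by the rural hospitals theorem applied to the final matching), or the chain ends at a company with an unfilled slot so that either $a_i$ or some displaced $a_j$ is absorbed without further loss (case 2), or the chain ends by exhausting some student's preferences (case 3, where $a_i$ is matched and $a_j$ becomes unmatched). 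The proof for increasing the quota of $c_j$ is parallel: create the extra seat at $c_j$, and if it triggers any reassignment, the resulting chain by the same rural hospitals argument either closes immediately or absorbs exactly one previously unmatched student.

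The main technical subtlety, and the step I expect to require the most care to write out precisely, is part (iii): the three-case disjunction requires tracking the chain and invoking part (ii) to rule out the scenario where two distinct students become permanently displaced, which would contradict the invariance of the matched-student set when restricted to the original instance. Parts (i) and (ii) are standard but need attention in the tie-breaking construction and in the treatment of underfilled companies respectively.
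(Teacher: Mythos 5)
The paper does not actually prove this theorem: it is stated explicitly as a package of \emph{well-known results}, each part attributed to the literature (the characterisation to Irving and Manlove, the rural hospitals theorem to Roth, and the vacancy-chain lemma to Blum, Roth and Rothblum), and is then used as a black box in the proof of Conjecture~\ref{con:equal} for two types. Your proposal therefore does more than the paper does, and what you sketch is essentially the standard argument from each of the cited sources: the tie-breaking construction that places the $\mu$-assignees of each company at the top of their tie class is exactly the classical proof of (i); the opposition-of-interests/counting argument between the student-optimal and company-optimal matchings is the classical proof of (ii); and the proposal--rejection chain started by the new student (or the new seat) is precisely the vacancy-chain analysis of (iii). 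The outlines are sound. Two places would need the most care if you were to write them out in full. First, in (ii) you invoke the invariance of the assigned set at underfilled companies as a ``key lemma'' for the first two claims, whereas the usual development derives the matched-student set and the company loads first and the underfilled-company statement afterwards; as written the logical order is slightly inverted, though not wrong. Second, in (iii) your description of the case where the chain terminates at a company with a free slot (``either $a_i$ or some displaced $a_j$ is absorbed'') should be phrased at the level of the \emph{set} of allocated students: the net effect is that the old set is extended by $a_i$ regardless of which individual physically occupies the free slot, and you correctly note that part (ii) is what lets you transfer this conclusion from the one stable matching your chain produces to all stable matchings of $I'$. With those points tightened, your sketches are a legitimate self-contained substitute for the paper's citations.
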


\begin{proof}[of Conjecture \ref{con:equal} for two types]
Suppose that we have a project allocation problem with two types of students, $A_1=\{a_1, a_2, \dots a_{n_1}\}$ and $A_2=\{a_{n_1+1}, a_{n_1+2}, \dots, a_{n_1+n_2}\}$, companies $C=\{c_1, \dots , c_m\}$ and exact quotas $L^1=U^1$ and $L^2=U^2$. Let $U$ denote the total capacity, i.e., $U=\sum_{j=1..m}u_j$. Without loss of generality we suppose that $L^1+L^2=U$ with $L^1\leq n_1$ and $L^2\leq n_2$, and $m\leq n_1+n_2$, which implies that all the weakly stable matchings have size $U$, since we assume that every student-company pair is mutually acceptable. Let $e$ denote the extra score given to students of the first type. Note that if $e$ is a high number then all the first type students are admitted up to the total quota and if $e$ is very small (negative) then all the second type student are admitted up to the total quota. However, the number of first type students admitted does not necessarily grow when we increase $e$.\footnote{The following example illustrates this. Let us have $A^1=\{a_1, a_2, a_3\}$ and $A^2=\{a_4,a_5\}$, and $C=\{c_1,c_2,c_3\}$ with quota 1 at each company. Suppose that every student prefers $c_1$ to $c_2$ and $c_2$ to $c_3$. The students have the following scores: $s_{1,1}=5$, $s_{1,2}=7$, $s_{1,3}=1$, $s_{2,1}=1$, $s_{2,2}=1$, $s_{2,3}=3$, $s_{3,1}=1$, $s_{3,2}=1$, $s_{3,3}=1$,  $s_{4,1}=6$, $s_{4,2}=1$, $s_{4,3}=6$,  $s_{5,1}=2$, $s_{5,2}=6$, $s_{5,3}=2$. When no extra score is added to $A^1$ then the unique stable matching is $M=\{a_1c_2, a_2c_3, a_4c_1\}$, and when we increase the score of students in $A^1$ then the unique stable matching is $M'=\{a_1c_1, a_4c_3, a_5c_2\}$, thus the number of first type students allocated decreases.} Let $I$ denote the original instance of HRT and let $I_e$ be the instance of HRT obtained after adding point $e$ to all students in $A^1$. For a matching $M$ let $|M|_1$ denote the number of first type student allocated and similarly, let $|M|_2$ denote the number of second type students allocated in $M$. The goal is to find a suitable extra score $e$ such that there exists a weakly stable matching $M$ for $I_e$ such that $|M|_1=L^1$ (which implies that $|M|_2=L^2$). In fact, we will construct a HR instance $I_e'$, obtained from $I_e$ by tie-breaking, such that matching $M$ is stable for $I_e'$ with the required distributional property.

As we already noted, if $e$ is a large negative number then $|M|_1=0$ for any stable matching $M$ in $I_e$ and if $e$ is a very large positive number then $|M|_1=min\{n_1,m\}$. For instance $I_e$ of HRT, let $I_e^{<1}$ denote the HR instance where all the ties are broken in favour of $A^1$ students (and among the students of the same type we use an arbitrary tie-breaking, say, according to their indices). Similarly, let $I_e^{<2}$ denote the HR instance, where we break all the ties in favour of $A^2$ students. First, we have to observe that $I_e^{<1}$ is the same as $I_{e+1}^{<2}$ for any $e$. Note also that the number of allocated first type students (and their set) is fixed for any HR instance by Theorem \ref{thn:old}/ii) across all stable matchings. Therefore there must exist a number $e$ such that for any stable matching $M_{e-1}$ for instance $I_{e-1}^{<1}(=I_e^{<2})$ we have $|M_{e-1}|_1\leq L^1$, and for any stable matching $M_{e}$ for instance $I_e^{<1}$ we have $|M_{e}|_1\geq L^1$. We will show that there is an instance $I_e'$, obtained by tie-breaking from $I_e$, such that for every stable matching the number of allocated $A^1$ students is exactly $L^1$.

We start from $I_{e}^{<2}$ and we will gradually transform it into $I_e^{<1}$ by giving higher priority in the tie-breaking to one $A^1$ student in each step. Let $I_{e}^{0}=I_{e}^{<2}$, and for each $i\in [1..n_1]$ let $I_{e}^{i}$ denote the HR instance where we favour the students $\{a_1,\dots , a_i\}$ over students in $A^2$, who are favoured to student in $\{a_{i+1}, \dots , a_{n_1}\}$. What we will show is that if $M$ is any stable matching for $I_{e}^{i}$ and $M'$ is any stable matching for $I_e^{i+1}$ then $|M|_1-1\leq |M'|\leq |M|_1+1$, so the number of $A^1$ students allocated can either increase or decrease by at most one. This will imply that we must get an instance $I_{e}^{i}$, where the number of first type students is exactly $L^1$.

To prove the above inequalities we have to consider two situations. First, let us assume that $a_{i+1}$ is unmatched in $M$ (and so in every stable matching for $I_e^i$). Thus $M$ would also be stable if we would remove $a_{i+1}$ from $I_e^i$. Let us now put back $a_{i+1}$, but with higher priority, creating instance $I_e^{i+1}$. By Theorem \ref{thn:old}/iii) either the number of $A^1$ students remain the same or it increases by one. Suppose now that $a_{i+1}$ is allocated in $M$ to company $c_j$. $M$ will remain stable if we remove both $a_{i+1}$ and one seat at $c_j$ from $I_e^i$, while the number of $A^1$ students allocated in the reduced matching decreases by one. If we add back one seat at $c_j$ and subsequently we also add back $a_{i+1}$ with increased priority, creating instance $I_e^{i+1}$, then from Theorem \ref{thn:old}/iii) we know that in each of these two steps the number of $A^1$ students matched either remains the same or increases by one. So, in overall, the number of $A^1$ students can either decrease by one, remain the same, or increase by one. This completes our proof.
\end{proof}

\section{Conclusion}

We investigated different solution concepts for stable matching problems with distributional constraints motivated by two real applications where we had to design the allocation mechanism. We chose integer programming as the solution technique which proved to be successful for these relatively small applications. We believe that our solution concepts and techniques could be considered in other applications as well, such as controlled school choice and university admission with affirmative action. As far as the participants are concerned, we have received very positive feedbacks from both the students and the companies, especially compared to the previous years. There are still plenty of interesting questions to investigate mostly about the importance of different fairness criteria and the trade-off between fairness and the distributional requirements.

\end{document}